\documentclass[letterpaper, 10 pt, conference]{ieeeconf}  % Comment this line out if you need a4paper

\IEEEoverridecommandlockouts                              % This command is only needed if 
                                                          % you want to use the \thanks command

\overrideIEEEmargins                                      % Needed to meet printer requirements.

\usepackage{amsfonts}
\usepackage{amsmath}
\usepackage{subfig}
\usepackage{graphicx}
\usepackage{amssymb}
\usepackage{mathtools}
\usepackage{xcolor}
\usepackage{caption}
\usepackage{float}
\usepackage{algorithm}
\usepackage{algpseudocode}
\usepackage{stmaryrd}
\usepackage{epstopdf}
\usepackage{multirow}
\usepackage{color,soul}
\usepackage{accents}
\usepackage{float}
\usepackage{verbatim}

\newcommand{\R}{{\mathbb{R}}}

\newcommand{\PJ}[1]{\textcolor{red}{PJ: #1}}

\newtheorem{theorem}{Theorem}[section]

\newtheorem{assumption}{Assumption}

\newtheorem{definition}[theorem]{Definition}

\newtheorem{remark}[theorem]{Remark}
\newtheorem{problem}[theorem]{Problem}

\title{\LARGE \bf
Reach-Avoid-Stay-Collision-Avoidance Negotiation Framework for Multi-Agent Systems via Spatiotemporal Tubes}
%\author{Mohd. Faizuddin Faruqui, Ratnangshu Das, \textit{Student Member, IEEE} and Pushpak Jagtap, \textit{Member, IEEE}% <-this % stops a space
\author{Mohd. Faizuddin Faruqui$^{1,3}$, Ratnangshu Das$^{1}$, Ravi Kumar L$^{3}$, and Pushpak Jagtap$^{1,2}$% <-this % stops a space
\thanks{This work was supported in parts by the ARTPARK, the Department of Science and Technology, the Government of Karnataka, and the Siemens}% <-this % stops a space
%\thanks{M.F. Faruqui and R. Kumar L are with U.R. Rao Satellite Centre, ISRO, Bangalore, India {\tt\small \{mfaruqui,rkkumarl\}@ursc.gov.in}}%
\thanks{$^1$Center for Cyber-Physical Systems, Indian Institute of Science, Bengaluru, India, $^2$Department of Aerospace Engineering, Indian Institute of Science, Bengaluru, India, and $^3$U.R. Rao Satellite Centre, ISRO, Bengaluru, India. {\tt\small \{ratnangshud,pushpak\}@iisc.ac.in}, {\tt\small \{mfaruqui,rkkumarl\}@ursc.gov.in}.}%
}

\begin{document}

\maketitle
\thispagestyle{empty}
\pagestyle{empty}

\begin{abstract}
This study presents a multi-agent negotiation-based framework to obtain collision-free paths while performing prescribed-time reach-avoid-stay (RAS) tasks for agents with unknown dynamics and bounded disturbance. By employing spatiotemporal tubes to generate time-varying state constraints, we ensure that all agents adhere to RAS specifications using synthesized controllers. To prevent inter-agent collisions, a negotiation mechanism is proposed where successful negotiations result in spatiotemporal tubes for each agent fulfilling desired tasks. This approach results in a completely distributed, approximation-free control law for each agent. The effectiveness of this mechanism was validated through simulations of multi-agent robot navigation and drone navigation tasks involving prescribed-time RAS specifications and collision avoidance. 
\end{abstract}

% no keywords

% For peer review papers, you can put extra information on the cover
% page as needed:
% \ifCLASSOPTIONpeerreview
% \begin{center} \bfseries EDICS Category: 3-BBND \end{center}
% \fi
%
% For peerreview papers, this IEEEtran command inserts a page break and
% creates the second title. It will be ignored for other modes.
\IEEEpeerreviewmaketitle

\section{Introduction}
% no \IEEEPARstart
Multi-agent systems offer several advantages over their monolithic counterparts, particularly in high-stake and safety-critical arenas. Several applications in multi-agent systems, such as consensus and connectivity maintenance, have attracted further interest in complex scenarios such as obstacle and collision avoidance \cite{olfati-saber_flocking_2006}, \cite{sun2017robust}. These tasks are classified as safety tasks in \textit{formal methods} parlance, which is the primary requirement of any autonomous system in path and motion planning scenarios. Reach-Avoid-Stay (RAS) tasks enable agents to reach and stay in a target set from an initial set while avoiding unsafe sets. Controllers synthesized for RAS specifications enforce state constraints on the agent and ensure that the state trajectory meets RAS specifications. Additionally, for multi-agent autonomous systems, inter-agent collision avoidance must also be ensured.  
\par As safety enforcement on continuous time dynamical systems will require constraining temporal behavior, feedback controllers manipulating control input on-the-go may not provide formal guarantees. Alternatively, synthesis methods such as symbolic control with formal guarantees have gained momentum as a more robust approach. Symbolic control methods involve the synthesis of controllers that satisfy safety specifications for a sound abstraction of the original system \cite{hutchison_approximately_2012}, \cite{sundarsingh2023scalable}. Similarly, for an abstracted multi-agent system, the finite states can be represented as vertices of a graph, on which strategies for satisfying specifications that are expressed as Linear Temporal Logic (LTL) can be formulated \cite{fijalkow2023games}, \cite{samuel2023efficientcontrollersynthesistechniques}. 
\par However, studies related to controller synthesis for RAS specifications have focused on different approaches to arrive at correct-by-construction controllers for RAS tasks \cite{Das2023Funnel-based}, \cite{9867253}, \cite{9390193}. Path planning for multi-agent scenarios where the arena is compact and high-stakes cannot rely on dynamic feedback strategies that may result in deadlock or stalemate situations \cite{Mylvaganam2017A}. Studies related to multi-agent collision and obstacle avoidance have tackled the problem using controllers based on dynamic inter-agent feedback which is dependent on inter-agent communication despite being distributed \cite{Mylvaganam2017A}, \cite{DAI20172068}. Real-time communication between agents was used in negotiation for collision avoidance in \cite{article} for dynamic adaptability. The prime hurdles identified in addressing RAS with Collision Avoidance (RASCA)  tasks for multi-agent systems were requirement of real-time reliable communication among agents and knowledge of agent dynamics. We consider systems with unknown dynamics in this paper. However, for unknown system dynamics, reliance on real-time communication and dynamic feedback can compromise the safety-critical specifications as well as prescribed time guarantees.

To cope with these issues, we propose a distributed abstraction-free approach that works directly with the system states in continuous time, enforcing temporal state constraints to satisfy the RASCA specifications for unknown dynamics. The main contribution of this study is a negotiation agreement strategy that provides a set of parameterized spatiotemporal tubes guaranteeing collision-free corridors to the agents. Negotiation as a tool for conflict resolution is effectively used in multi-agent scenarios involving coalitions \cite{1668250}. The negotiation object among the agents in this study is a temporally valid tube such that the collision avoidance specification is satisfied. In this study, once the negotiation terminates, all agents are assigned a set of parameterized tubes and subsequently, the synthesized controllers. The network topology adopted in this study facilitates distributed navigation, as once the controllers are synthesized, agents are not dependent on communication.

%The rest of the paper is structured as follows. In Section \ref{Notation}, we introduce the necessary notation and preliminaries. Section \ref{STT} defines the spatiotemporal tubes and Section \ref{Negotiation} formalizes the concept of negotiation. Section \ref{CtrlSyn} gives details on controller synthesis and simulations are given in Section \ref{Sim}.   

\section{Notation and Preliminaries}
\label{Notation}
The symbols $\mathbb{R}, \mathbb{R}_0^+, \mathbb{R}^+, \mathbb{Z}^+$ and $\mathbb{Z}_0^+$ denote sets of reals, non-negative reals, positive reals, positive integers and non-negative integers respectively. A column vector with $n$ rows is represented as $\mathbb{R}^n$ and the vector space with $m$ rows and $n$ columns is represented as $\mathbb{R}^{m \times n}$, where $n,m \in \mathbb{Z}^+$. Given an index set $\mathcal{I} = \lbrace 1,2,3,...,N \rbrace$ of $N \in \mathbb{Z}^+$ agents and sets $X_i$, $i \in \mathcal{I}$, the Cartesian product of these sets is given by $\prod_{i\in \mathcal{I}} X_i:= X_1 \times X_2 \times .... \times X_N = \lbrace (x_1,x_2,...,x_N)|x_i \in X_i, i\in \mathcal{I}\rbrace$. The empty set is denoted by $\emptyset$.
\subsection{System and Topology}
\label{SecIIa}
A multi-agent system consisting of $N \in \mathbb{Z}^+$ agents can be expressed as $\Sigma = \lbrace \Sigma_1,\Sigma_1,...,\Sigma_N \rbrace$. The dynamics of the $i^{th}$ agent $\Sigma_i$, $i\in\mathcal{I}=\{1,2,\ldots,N\}$, considered as a control affine system is expressed as: \\
\begin{equation}
\Sigma_i: \dot{x}_i(t) = f_i(x_i(t)) + g_i(x_i(t))u_i(t) + d_i(t),
\label{SysDyn}
\end{equation}
where $x_i(t) \in X_i \subseteq \mathbb{R}^{n_i}, u_i(t) \in \mathbb{R}^{m_i}$ are the state and input vectors of the agent, respectively, and $d_i(t)$ is an unknown bounded disturbance. The functions $f_i: \mathbb{R}^{n_i} \rightarrow \mathbb{R}^{n_i}$ and $g_i: \mathbb{R}^{n_i} \rightarrow \mathbb{R}^{n_i \times m_i}$ are called the flow drift and input matrix, respectively.
\begin{assumption}
\label{assum0}
    The functions $f_i$ and $g_i$ for all the agents $i\in\mathcal{I}$ are locally Lipschitz and $\emph{unknown}.$ 
\end{assumption}
\begin{remark}
    The proposed control framework for control-affine systems can be naturally extended to general nonlinear dynamics by treating the unmodeled nonlinear terms as an additional disturbance. This makes the approach applicable to a broader class of nonlinear systems without requiring explicit system identification.
\end{remark}
\par The state spaces $X_i$ of each agent in the multi-agent system $\Sigma$ are closed and connected. State space X of the multi-agent system $\Sigma$ is the Cartesian product of $N$ state spaces $X_i \subseteq \mathbb{R}^{n_i}, i \in \mathcal{I}$ and expressed as: $\prod_{i\in \mathcal{I}} X_i:= X_1 \times X_2 \times .... \times X_N$.

\par Graphs were used to establish the network topology between agents for communication during negotiation. Consider an undirected graph $G = (\mathcal{I},\mathcal{E})$, such that $\mathcal{E} \subseteq \mathcal{I} \times \mathcal{I}$ is the set of edges that represents bidirectional communication between agents represented by vertices, where $\mathcal{I} = \lbrace 1,2,3, ..., N \rbrace $ is the set of $N$ agents \cite{Mesbahi2010GraphTM}. In this work, we considered a fully connected network in which every agent can communicate with every other agent until the negotiation terminates. We use the notation $\mathcal{N}_i$ to represent the set of communicating neighborhood agents of agent $i$ defined as $\mathcal{N}_i = \mathcal{I} \symbol{92} \lbrace i \rbrace$. We assume that in this fully connected network of agents, the $i^{th}$ agent possessing the token, functions as the hub of the star topology, analogous to computer networks \cite{tanenbaum2003computer}. The neighboring agents update their path plan to the hub agent through spatiotemporal tubes (STT) \cite{das2024prescribed}, as explained in later sections. The order of token passing can be either random or prioritized agent-wise, and we consider the sequential order in this study. 
\subsection{Problem Formulation}
Reach-avoid-stay (RAS) tasks are a classical robot navigation problem that occurs under various scenarios in real-life robotic applications. The problem statement in this study extends this work to multi-agent systems with an additional specification of collision avoidance (CA). The prescribed time RAS with collision avoidance (RASCA) task addressed in this study can be defined as in Definition \ref{Def2.1}.

\begin{definition}[Prescribed time RASCA task]
\label{PT-RASCA-Def}
Given a set of agents indexed in $\mathcal{I}$ operating in the state space X, the communication graph among the agents $G = (\mathcal{I},\mathcal{E})$ and the unsafe set $U_i \subset X_i$, let $S_i \subseteq X_i \setminus U_i$ be the initial set, $T_i \subseteq X_i \setminus U_i$ be the target set and $t_p^{(i)} \in \R^+$ be the prescribed time to complete the task for each agent $i$, where $i \in \mathcal{I}$.
We say that each agent $i$ satisfies the prescribed-time RAS task if for the given initial position $x_i(0) \in S_i, \: \exists t \in [0, t_p^{(i)}]$ such that $x_i(t) \in T_i$ and for all $s \in [0,t_p^{(i)}]$, $x_i(s) \in X_i \setminus U_i$. 
Additionally, the agent $i$ satisfies the collision avoidance (CA) task if for all $t \in [0, t_p^{(i)}], x_i(t) \neq x_j(t), \forall j \in \mathcal{N}_i$.
\label{Def2.1}
\end{definition}
\par To satisfy the prescribed-time RAS task for each individual agent, this study leverages the STT framework \cite{das2024prescribed} and invokes the following assumption:
\begin{assumption}
    The static obstacles $O_j^i \subseteq U_i \subset X_i, j \in [1;n^{(u)}_i]$, where $n^{(u)}_i$ is the number of static obstacles, initial sets $S_i$ and target sets $T_i$ are assumed to be convex and compact. It is assumed that the agents' initial location and target location are both obstacle-free (i.e.,
    $\forall i \in \mathcal{I}, S_i \cap O^i_j = T_i \cap O^i_j = \emptyset, \forall j \in [1;n^{(u)}_i]$) and collision-free (i.e., 
    $\forall i \in \mathcal{I}, S_i \cap S_j = T_i \cap T_j = \emptyset, \forall j \in \mathcal{N}_i$).
    \label{assum1}
\end{assumption}
\begin{remark}
    The unsafe set $U_i = \bigcup_{j\in [1;n^{(u)}_i]} O_j^i,$ can be concave and disconnected.
    Further, the agent arena can be over-approximated into a hyper-rectangle: $\llbracket X_i \rrbracket=\prod_{k=[1;n_i]}\left[\underline{X}_{i}^{k}, \overline{X}_i^{k}\right], \text{ where } \underline{X}_{i}^{k} := \min \left\{x_i^{(k)} \in \mathbb{R} \mid[x_i^{(1)},\ldots, x_i^{(n_i)}] \in X_i\right\}, \overline{X}_i^{k}:=\max \{x_i^{(k)} $ $ \in \mathbb{R} \mid\left[x_i^{(1)}, \ldots, x_i^{(n_i)}\right] \in X_i\}$ are the projections of the space $X_i$ on the $k^{th}$ dimension and $[x_i^{(1)},\ldots,x_i^{(n_i)}]$ is the state vector of agent i in state space $X_i$, and the unsafe set is expanded as $\llbracket U_i \rrbracket = U_i \cup \llbracket X_i \rrbracket \setminus X_i$. Without loss of generality, in the remainder of the paper, we will assume $X_i$ to be a hyper-rectangle unless stated otherwise.
\end{remark}
Thus, we formally present the problem statement addressed in this paper.
\begin{problem}
       Given the multi-agent system $\Sigma$, the state space X, communication graph $G = \lbrace \mathcal{I}, \mathcal{E} \rbrace$, set of initial sets $S=\lbrace S_i \rbrace_{i\in \mathcal{I}}$ and set of target sets $T=\lbrace T_i \rbrace_{i\in \mathcal{I}}$, such that all agents belonging to the multi-agent system $\Sigma$ satisfy Assumption \ref{assum0} and \ref{assum1},  design a negotiation mechanism that achieves collision avoidance to ensure the satisfaction of the prescribed time RASCA task as in Definition \ref{Def2.1}. 
\end{problem}

\section{Spatiotemporal Tubes (STT)}
\label{STT}
To enforce the Prescribed-Time RASCA (PT-RASCA) specifications for each agent, we leverage the STT framework \cite{das2024prescribed}.

\begin{definition}[Spatiotemporal Tubes for PT-RASCA]
\label{def} 
Given the prescribed-time RASCA task as in Definition \ref{PT-RASCA-Def}, an STT for an agent $i$ is defined as a time-varying interval {$\mathcal{R}_i(t) = \prod_{k=1}^{n_i}\mathcal{R}_i^{(k)}(t)$, where, $\mathcal{R}_i^{(k)}(t) = [\gamma^{(k)}_{i,L}(t), \gamma^{(k)}_{i,U}(t)]$, $k \in [1;n_i]$}, with $\gamma_{i,L}^{(k)}(t):\R_0^+ \rightarrow \R$ and $\gamma_{i,U}^{(k)}(t):\R_0^+ \rightarrow \R$ are continuously differentiable functions satisfying $\gamma_{i,L}^{(k)}(t) < \gamma_{i,U}^{(k)}(t)$ for all times $t \in [0, t_p^{(i)}]$ and for each dimension $k \in [1;n_i]$, with the following properties: 
\begin{align}\label{eqn:stt}
    &\prod_{k=1}^{n_i} [\gamma_{i,L}^{(k)}(t), \gamma_{i,U}^{(k)}(t)] \subseteq X_i, \forall t \in [0,t_p^{(i)}], \nonumber \\
    &\prod_{k=1}^{n_i} [\gamma_{i,L}^{(k)}(0), \gamma_{i,U}^{(k)}(0)] \subseteq S_i, \
    \prod_{k=1}^{n_i} [\gamma_{i,L}^{(k)}(t_p^{(i)}), \gamma_{i,U}^{(k)}(t_p^{(i)})] \subseteq T_i, \nonumber \\
    &\prod_{k=1}^{n_i} 
    [\gamma_{i,L}^{(k)}(t), \gamma_{i,U}^{(k)}(t)] \cap U_i = \emptyset, \forall t \in [0,t_p^{(i)}].
\end{align}
\end{definition}
As discussed in \cite{das2024prescribed}, the generation of the STT involves three primary steps: (i) designing reachability tubes $\mathcal{R}_i(t)$ to guide the system trajectory from the initial set $S_i$ to the target set $T_i, \forall i\in \mathcal{I}$, (ii) introducing a circumvent function to avoid the unsafe set $U_i$, and (iii) adjusting the tubes around the unsafe region through an adaptive framework to obtain obstacle-circumvented tubes
$\tilde{\mathcal{R}}_i(t) = \prod_{k=1}^{n_i}\tilde{\mathcal{R}}_i^{(k)}(t)$, where, $\tilde{\mathcal{R}}_i^{(k)}(t) = [\tilde{\gamma}^{(k)}_{i,L}(t),\tilde{\gamma}^{(k)}_{i,U}(t)]$.

% Once we obtain collision-free tubes $\left\{\hat{\mathcal{R}}_i\right\}_{i \in \mathcal{I}}$ using Algorithm \ref{algo-1}, we circumvent these tubes at static obstacles locations to obtain final collision-free and obstacle-free tubes $\tilde{\mathcal{R}}(t) = \lbrace \tilde{\mathcal{R}}_i(t) \rbrace _{i \in \mathcal{I}}$, for which the controllers are synthesized.

However, in this study, to handle the collision avoidance (CA) task, after step (iii), we introduce an additional negotiation step that provides us with the parameterized spatiotemporal tubes $\hat{\mathcal{R}}_i^{(k)}(t) = [\hat{\gamma}^{(k)}_{i,L}(t),\hat{\gamma}^{(k)}_{i,U}(t)], \forall i \in \mathcal{I}, k \in [1;n_i]$ as explained in Section \ref{Negotiation}. Let $\mathcal{N}_i^c \subseteq \mathcal{N}_i$ be the subset of neighbors of agent $i$ with which its tube collide, i.e. satisfy (3) and thereby includes candidate agents with which the agent $i$ will negotiate, let ${\mathcal{R}}(t) = \lbrace \mathcal{R}_i(t) \rbrace _{i \in \mathcal{I}}$ be the set of Spatiotemporal tubes and $\hat{\mathcal{R}}(t) = \lbrace \hat{\mathcal{R}}_i(t)  \rbrace _{i \in \mathcal{I}}$ be the set of all negotiated tubes.  Collision for an agent $i$ can happen when its Spatiotemporal tube intersects with other agents' tube, i.e., 
% as expressed by (\ref{Coll_Cond}),
% \PJ{I do not understand (3), it is very unlikely that the tubes will intersect for all time $t$! Also, Is tube defined after time $t_p^{(i)}$? The intersection might not be well defined after $t_p^{(i)}$ as it can be different for different agents.}  
\begin{align}
\label{Coll_Cond}
&\exists t \in [0, \min (t_p^{(i)}, t_p^{(j)})], \text{ such that,}\nonumber \\
&\quad \mathcal{R}_i(t) \cap {\mathcal{R}}_j(t) \neq \emptyset, \forall i \in \mathcal{I},  j \in \mathcal{N}_i^c.
% \text{ and } \forall t \in \mathbb{R}^+_0. 
\end{align}

{\section{Negotiation using parameterized tubes}\label{Negotiation}}
Providing collision avoidance guarantees for reactive agents of a multi-agent system in a restrictive arena is a complex task \cite{gelbal2019cooperative}. 
%In scenarios where the arena is too compact, tubes may be in the same place at the same time, or the non-holonomic constraints may be restrictive, thereby enforcing collision \cite{AlonsoMora2010OptimalRC}. 
Although the tubes can be circumvented in the time intervals of collision to avoid the collision condition mentioned in (\ref{Coll_Cond}), it may lead to intersections with other tubes. To mitigate this problem, we propose using parameterized spatiotemporal tubes. 
% Let $Q^{(i)} = \bigcup_{l=1}^{p^{(i)}} q^{(i)}_l = [0,t_p^{(i)}]$ represent a time span of the tube that is partitioned into $p^{(i)}$ non-overlapping time intervals $\lbrace q_l^{(i)} \rbrace _ {l\in[1;p^{(i)}]}$. The intervals in $Q^{(i)}$ include $\alpha^{(i)} \in \mathbb{Z}^+$ closed time intervals, called collision intervals, during which agent $i$ have collision (defined in \ref{Def4.1}), as well as $\beta^{(i)} \in \mathbb{Z}^+$ non-collision intervals which may be open or semi-open time intervals. Let $\mathcal{R}_i^{q_l^{(i)}}$ be the Spatiotemporal tube for agent $i$ that is valid for the time interval indexed by $q^{(i)}_l \in Q^{(i)}$. 
The collision interval and parameterized tubes are defined below.

\begin{definition}[Collision Interval for an Agent]
\label{Def4.1}
 Given the multi-agent system $\Sigma$ satisfying Assumption \ref{assum0} and \ref{assum1} and tubes $\lbrace \mathcal{R}_i(t) \rbrace _{i\in \mathcal{I}}$, the collision interval for an agent $i \in \mathcal{I}$ is defined by the closed time interval $[\underaccent{\bar}{t}^{(i)},\Bar{t}^{(i)}]$, where 
\begin{equation}
\begin{aligned}
        \underline{t}^{(i)} = \min_{j \in \mathcal{N}_i^c}(t \in [0, t^{(i)}_p] \ | \ {\mathcal{R}}_i(t) \cap {\mathcal{R}}_j(t) \neq \emptyset),\\
     \overline{t}^{(i)} = \max_{j \in \mathcal{N}_i^c}(t \in [0, t^{(i)}_p] \ | \ {\mathcal{R}}_i(t) \cap {\mathcal{R}}_j(t) \neq \emptyset).
     \end{aligned}
\end{equation}
\end{definition}

The start of this interval, $\underline{t}^{(i)}$, is defined as the earliest time at which a collision with any neighboring agent $j \in \mathcal{N}_i^c$ occurs, while $\overline{t}^{(i)}$ is the latest time of collision. This interval, $[\underaccent{\bar}{t}^{(i)},\Bar{t}^{(i)}]$, defines the duration during which agent $i$’s Spatiotemporal tube intersects with that of another agent and indicates the time window within which an adjustment is needed.

\begin{definition}
\label{TubePara}
(Parameterized Spatiotemporal Tube): Given an agent $i\in \mathcal{I}$ in a multi-agent system $\Sigma$ satisfying Assumption \ref{assum0}, with the collision interval $[\underline{t}^{(i)}, \overline{t}^{(i)}]$, the temporally parameterized tube is defined as:
\begin{equation}
    \hat{\mathcal{R}}_i(t) = \begin{cases} 
      {\mathcal{R}}_i(t), &\forall t \in [0, \underline{t}^{(i)}), \\
      {\mathcal{R}}_i(\underline{t}^{(i)} - \delta), &\forall t \in [\underline{t}^{(i)}, \overline{t}^{(i)}], \\
      {\mathcal{R}}'_i(t-\overline{t}^{(i)}), &\forall t \in (\overline{t}^{(i)}, t_p^{(i)}], 
   \end{cases}
\end{equation}
\end{definition}
where ${\mathcal{R}}'_i(t-\overline{t}^{(i)})$ is a tube with its initial set redefined as the hyper-polygon $\mathcal{R}_i(\underline{t}^{(i)}-\delta)$, $\delta$ is the factor which keeps the tube frozen in safe space outside the collision zone and the updated time allotted to complete the task is now $t_p^{(i)}-\overline{t}^{(i)}$.

Thus, the parameterized tubes are structured in three segments. During the pre-collision period $t \in [0, \underline{t}^{(i)})$, the parameterized tubes are set equal to the spatiotemporal tubes, allowing the agent to follow its original path. In the collision interval $t \in [\underline{t}^{(i)}, \overline{t}^{(i)}]$, the agent’s tube is spatially “frozen” before the start of the collision zone at $\mathcal{R}(\underline{t}^{(i)}-\delta)$, preventing it from entering into the collision area. Once the collision interval is over, for $t \in (\overline{t}^{(i)}, t_p^{(i)}]$, the tubes are redefined to guide the agent toward the target set from its current position within the remaining time $t_p^{(i)}-\overline{t}^{(i)}$. This redefinition allows the agent to complete its task within the adjusted timeline.

% Note that, the time-intervals of agent $i$, $q_\eta^{(i)} \in Q_n^{(i)}, i \in \mathcal{I}$ with lower limit $\underline{t}_\eta^{(i)}$ and upper limit $\Bar{t}_\eta^{(i)}$ during which there is no collision can be a semi-open or open intervals depending upon value of $\underline{t}_\eta^{(i)}$.
% \label{Def4.2}
% \end{definition}

% The final fused tube of agent $i$ is an over-approximated smooth union of $\alpha^{(i)}$ colliding and $\beta^{(i)}$ non-colliding spatiotemporal parameterized tubes covering the entire time range upto prescribed time $t_p^{(i)}$ and is expressed as:
% \begin{equation}
% \label{TubeUni}
% \hat{\mathcal{R}}_i(t) = \Bigl\{ \bigcup_{m=1}^{\alpha^{(i)}} \hat{\mathcal{R}}^{q_m^{(i)}}_i(t) \Bigr\} \; \cup \; \Bigl\{ \bigcup_{\eta=1}^{\beta^{(i)}} \ \hat{\mathcal{R}}^{q_\eta^{(i)}}_i(t) \Bigr\}, t \in  [0,t_p^{(i)}].
% \end{equation}

The NEGOTIATECOLLISION algorithm in Algorithm \ref{algo-1} operates iteratively to resolve collisions within a network of agents by adjusting their assigned Spatiotemporal tubes.   
\begin{algorithm}[]
    \caption{NEGOTIATECOLLISION(G, ${{\mathcal{R}}}(t)$)}
    \textbf{Input:} Spatiotemporal Tubes: ${\mathcal{R}}_i(t), \forall i \in \mathcal{I}$ \\
    \textbf{Output:} Collision-free parameterized tubes: $\hat{\mathcal{R}}_i(t), \forall i \in \mathcal{I}$ 
    \begin{algorithmic}[1]
        \State Initialize: $\hat{\mathcal{R}}_i(t) \leftarrow {\mathcal{R}}_i(t), \forall t \in [0,t_p^{(i)}], \forall i \in \mathcal{I}$
            \State \textbf{Set} iter $\leftarrow 1$
            \State \textbf{Set} updated $\leftarrow$ True
        \While{updated}
            \State \textbf{Set} updated $\leftarrow$ False
        %\While{${\mathcal{R}}_i(t) \cap {\mathcal{R}}_j(t) \neq \emptyset, \text{ for $some } t \in [0, \min (t_p^{(i)}, t_p^{(j)})], i \in \mathcal{I}, \text{and} \, j \in \mathcal{N}_i$}
            \For{$i \in \mathcal{I}$}
            \If{$\exists (t,j) \in [0, \min (t_p^{(i)}, t_p^{(j)})] \times \mathcal{N}_i,$ such that, ${\mathcal{R}}_i(t) \cap {\mathcal{R}}_j(t) \neq \emptyset$}
            % \State \textbf{continue}
            % \Else
                \State \texttt{Compute Collision Interval:} 
                % $$[\underline{t}^{(i)}, \overline{t}^{(i)}]$$
                $$\underline{t}^{(i)} \leftarrow \min_{j \in \mathcal{N}_i^c}(t \in [0, t^{(i)}_p] \ | \ {\mathcal{R}}_i(t) \cap {\mathcal{R}}_j(t) \neq \emptyset),$$
                $$\overline{t}^{(i)} \leftarrow \max_{j \in \mathcal{N}_i^c}(t \in [0, t^{(i)}_p] \ | \ {\mathcal{R}}_i(t) \cap {\mathcal{R}}_j(t) \neq \emptyset).$$
                \State \texttt{Update parameterized tubes:}
                $$\hat{\mathcal{R}}_i(t) \leftarrow \begin{cases} 
                  {\mathcal{R}}_i(t), &\forall t \in [0, \underline{t}^{(i)}), \\
                  {\mathcal{R}}_i(\underline{t}^{(i)}-\delta), &\forall t \in [\underline{t}^{(i)}, \overline{t}^{(i)}], \\
                  {\mathcal{R}}'_i(t-\overline{t}^{(i)}), &\forall t \in (\overline{t}^{(i)}, t_p^{(i)}], 
               \end{cases}$$
               \State \textbf{Set} updated $\leftarrow$ True
               \EndIf
            \EndFor
            \State \textbf{Set} iter $\leftarrow$ iter + $1$
            \State $\mathcal{R}_i(t) \leftarrow \hat{\mathcal{R}}_i(t), \forall t \in [0,t_p^{(i)}], \forall i \in \mathcal{I}$
        \EndWhile 
        \State \textbf{return} $\hat{\mathcal{R}}_i(t),  \forall i \in \mathcal{I}$
    \end{algorithmic}
\label{algo-1}
\end{algorithm}

The algorithm iteratively modifies the tubes by introducing temporal halts to ensure that all agents navigate without colliding. 
Initially, each agent's parameterized tube $\hat{\mathcal{R}}_i$ is set equal to their original Spatiotemporal tubes ${\mathcal{R}}_i$. 
The algorithm then enters a loop that checks for intersections between the Spatiotemporal tubes of each agent $i$ and its neighboring agents $j$ over the time interval $[0, t_p^{(i)}]$. 
If a collision is detected, it identifies the collision interval $[\underline{t}^{(i)}, \overline{t}^{(i)}]$, during which agent $i$’s tube is “frozen” spatially at the entry point of the collision zone to prevent it from entering it. 
After the interval ends, the tube is redefined to guide the agent to its target within the remaining time. This procedure is repeated for each agent, iteratively adjusting their tubes until all are collision-free.
The algorithm concludes by returning the modified, collision-free parameterized tubes, enabling coordinated and safe navigation for all agents. 
Through this process of detecting collision intervals and introducing delays, the algorithm ensures collision-free, efficient navigation for each agent.

% \RD{Where is the guarantee that we are not running into collision after circumventing?}\\
% \PJ{Yes, agents can end up into collision after circumventing. We can apply the negotiation process again to mitigate these collisions}\\
% \RD{Why not do circumventing before negotiation then? Negotiation only halts the agent spatially. As long as the obstacles are not temporally changing, they won't intersect with the parameterized tubes!}\\
% \PJ{Yes possible but if circumvent existed in the portion of the tube post collision zone, then the new tube R' that we construct in third case has to be circumvented again if R' also has overlap with the static obstacle}\\
% \RD{Yes! We can put circumventing in the algorithm itself! $\mathcal{R}_i$ and $\mathcal{R}'_i$ are not just reachability tubes, but STTs!}

\begin{theorem}
    Given the multi-agent system $\Sigma$ satisfying Assumption \ref{assum0} and \ref{assum1}, the state space X, communication graph $G = \lbrace \mathcal{I}, \mathcal{E} \rbrace$, set of initial sets $S=\lbrace S_i \rbrace_{i\in \mathcal{I}}$ and set of target sets $T=\lbrace T_i \rbrace_{i\in \mathcal{I}}$, the parameterized tubes $\hat{\mathcal{R}}_i(t)$ obtained from the NEGOTIATECOLLISION algorithm satisfy the following:
    $$\hat{\mathcal{R}}_i(t) \cap \hat{\mathcal{R}}_j(t) = \emptyset, \forall (i,j,t) \in \mathcal{I} \times \mathcal{N}_i^c \times [0,t_p^{(i)}],$$
    which ensures that the parameterized tubes are collision-free.
\end{theorem}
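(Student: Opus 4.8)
\emph{Proof strategy.} The plan is to decompose the claim into a \emph{correctness} statement (if Algorithm~\ref{algo-1} halts, its output is collision-free) and a \emph{termination} statement (the \texttt{while} loop executes finitely many times). Correctness will follow almost directly from the structure of the loop, whereas termination is the part that requires genuine work.

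First I would establish correctness via a loop invariant. The \texttt{while} loop can exit only through its guard \texttt{updated} being \texttt{False} after a complete pass over $\mathcal{I}$. By the inner conditional, \texttt{updated} is reset to \texttt{True} whenever some agent $i$ still has a neighbour $j$ with $\mathcal{R}_i(t)\cap\mathcal{R}_j(t)\neq\emptyset$; hence at exit no such pair survives. Since line~13 copies $\hat{\mathcal{R}}_i$ back into $\mathcal{R}_i$ at the end of every iteration and the returned object is exactly this $\hat{\mathcal{R}}_i$, the terminal tubes satisfy $\hat{\mathcal{R}}_i(t)\cap\hat{\mathcal{R}}_j(t)=\emptyset$ for all $i$, all $j\in\mathcal{N}_i^c$, and all $t\in[0,t_p^{(i)}]$, which is precisely the claim. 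The only geometric input needed here is that a single freezing update genuinely removes the conflict that triggered it: on $[\underline{t}^{(i)},\overline{t}^{(i)}]$ the tube is held at $\mathcal{R}_i(\underline{t}^{(i)}-\delta)$, which by Definition~\ref{Def4.1} lies strictly before the earliest collision time, so agent $i$ never enters the contested region while the conflicting neighbour traverses it, and on $(\overline{t}^{(i)},t_p^{(i)}]$ the re-routed tube $\mathcal{R}'_i$ departs only after the latest collision time $\overline{t}^{(i)}$. When several neighbours contend for the same corridor a single update need not remove every conflict at once; such residual multi-agent conflicts are exactly what successive passes of the loop are designed to absorb, which is why correctness must be coupled with the termination argument.

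The main obstacle is termination, because freezing agent $i$ keeps it occupying $\mathcal{R}_i(\underline{t}^{(i)}-\delta)$ over a longer window and can therefore create \emph{new} conflicts with a third agent $k$, so a naive ``number of colliding pairs decreases each pass'' argument fails. To handle this I would exploit the sequential token order of Section~\ref{SecIIa}, which imposes a strict priority $1\prec 2\prec\cdots\prec N$. I would prove by induction on priority that after the prefix $\{1,\dots,k\}$ has been processed its tubes are mutually disjoint and \emph{fixed} for the remainder of the run: the hub holding the token only ever defers to strictly higher-priority tubes, which are already frozen, so its freeze-and-reroute is computed against a static environment and never forces a higher-priority agent to react. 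Under this linearisation each agent is frozen at most once against each higher-priority neighbour, bounding the number of updates by $O(N^2)$ and giving termination. As a potential-function alternative I would track the tuple $(\underline{t}^{(1)},\dots,\underline{t}^{(N)})$ of freeze onsets, which is non-decreasing under every update and bounded above by $t_p^{(i)}$; ruling out an infinite refinement then reduces to showing these onsets cannot accumulate, again using the finite agent count and the fixed margin $\delta$.

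A secondary point I would verify is feasibility of the re-routed segment: the induction must maintain $\overline{t}^{(i)}<t_p^{(i)}$ so that the remaining budget $t_p^{(i)}-\overline{t}^{(i)}$ is positive and $\mathcal{R}'_i$ can still reach $T_i$, keeping the returned object a well-defined STT in the sense of Definition~\ref{def}. This does not enter the disjointness statement directly, but it is needed for the output tubes to be meaningful; I expect it to hold whenever the arena admits sufficient temporal slack, and I would record it as the standing feasibility condition under which the negotiation succeeds.
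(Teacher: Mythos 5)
Your correctness half is, in substance, exactly the paper's own proof. The paper argues by contradiction over the three segments of $\hat{\mathcal{R}}_i(t)$ from Definition~\ref{TubePara} (pre-collision, frozen, re-routed), and its final case reads: if a collision reoccurs after $\overline{t}^{(i)}$, the algorithm finds a new collision interval and repeats, ``contradicting our assumption on termination of the algorithm.'' That is your loop-invariant/exit-condition argument in different clothing: both proofs establish only that \emph{if} Algorithm~\ref{algo-1} halts, the returned tubes are pairwise disjoint. So for the claim as stated, you and the paper take essentially the same route, and your observation that a single freeze need not resolve all conflicts at once (so correctness leans on the loop re-checking) is the same mechanism the paper invokes.

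Where you genuinely diverge is in splitting off termination and attempting to prove it; the paper does not do this, and its proof is explicitly conditional on termination. You are right that this is the hard part, but your sketch does not go through as written. The priority induction asserts that once the prefix $\{1,\dots,k\}$ has been processed, its tubes are ``fixed for the remainder of the run''; Algorithm~\ref{algo-1} enforces no such deference. Line~7 makes \emph{every} agent, including agent $1$, update again whenever it detects an intersection with \emph{any} neighbour, so a lower-priority agent's re-route can force a higher-priority agent to move on the next pass (in the paper's Example-1, if agent $4$'s updated tube intersected $\hat{\mathcal{R}}_1(t)$, agent $1$ would be modified again). Hence the claimed $O(N^2)$ bound has no basis in the pseudocode. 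The potential-function alternative also fails as stated: the freeze onsets $\underline{t}^{(i)}$ need not be non-decreasing, because a neighbour $j$ frozen at $\mathcal{R}_j(\underline{t}^{(j)}-\delta)$ can create a \emph{new} intersection with the unchanged early segment of agent $i$'s tube at some $t<\underline{t}^{(i)}$, pulling agent $i$'s next onset earlier. Your feasibility caveat (needing $\overline{t}^{(i)}<t_p^{(i)}$ with enough slack for $\mathcal{R}'_i$ to reach $T_i$) is likewise real and unaddressed in the paper. In short: your proof of the stated theorem matches the paper's; the extra termination and feasibility material correctly identifies gaps that the paper leaves open, but your proposed arguments for them are themselves incomplete, so termination remains unproven on both sides.
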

\begin{proof}
    We will prove this by contradiction. Let us assume that the algorithm has terminated and we still have the following, $\hat{\mathcal{R}}_i(t) \cap \hat{\mathcal{R}}_j(t) \neq \emptyset$, for some $i \in \mathcal{I}$, $j\in \mathcal{N}_i^c$ and $t \in [0, {t}_p^{(i)}]$. 
    
    From Algorithm 1, we obtain the collision interval $t \in [\underline{t}^{(i)}, \overline{t}^{(i)}]$. For $t \in [0, \underline{t}^{(i)})$, since $\hat{\mathcal{R}}_i(t)$ and $\hat{\mathcal{R}}_j(t)$ are designed to be disjoint, it follows that 
    $$\forall  t \in [0, \underline{t}^{(i)}), \hat{\mathcal{R}}_i(t) \cap \hat{\mathcal{R}}_j(t) = \emptyset.$$
    For, $t \in [\underline{t}^{(i)}, \overline{t}^{(i)}]$, $\hat{\mathcal{R}}_i(t)$ is spatially frozen outside the collision zone at ${\mathcal{R}}_i(\underline{t}_i-\delta)$. Consequently, 
    $$\forall  t \in [\underline{t}^{(i)}, \overline{t}^{(i)}], \hat{\mathcal{R}}_i(t) \cap \hat{\mathcal{R}}_j(t) = \emptyset.$$ 
    For, $ t \in (\overline{t}^{(i)}, {t_p}^{(i)}]$, if there is no collision, we have 
    $$\forall  t \in (\overline{t}^{(i)}, {t_p}^{(i)}], \hat{\mathcal{R}}_i(t) \cap \hat{\mathcal{R}}_j(t) = \emptyset,$$
    which contradicts our assumption that there exists $t \in [0, {t}_p^{(i)}]$, such that, $\hat{\mathcal{R}}_i(t) \cap \hat{\mathcal{R}}_j(t) \neq \emptyset$. On the other hand,  if a collision reoccurs in this segment, the algorithm identifies a new collision interval $[\underline{t}_{new}^{(i)}, \overline{t}_{new}^{(i)}] \subset (\overline{t}^{(i)}, t_p^{(i)}]$, and the same freezing-and-redefining process repeats, contradicting our assumption on termination of the algorithm.
\end{proof}

\begin{comment}
The following theorem guarantees the existence of a negotiation strategy for agents:
\begin{theorem}
Given the tuple $\lbrace \Sigma,G, X, S, U,E, C, Q \rbrace$, each agent starting from an initial state in the set $S$ can always adopt a negotiation strategy in Algorithm \ref{algo-1} with other agents within its collision neighboring set $\mathcal{N}_i^c$.
\end{theorem}
\begin{proof}
Under Assumption 1, the number of collisions observed by agent $i$, denoted by $|Q^c_i| = m_c^i$, is based on the tubes communicated to agent $i$ according to the topology $G$. Using the tube parameterization as defined in \ref{TubePara} and as the controllers restrict all agent trajectories within their parameterized tubes, we ensure that $M(\hat{\mathcal{R}}_i(t)) \cap M(\hat{\mathcal{R}}_j(t)) = \emptyset$, for all agents $j$ in $\mathcal{N}_i^c$, and for all time $t\in q_m^i \subseteq Q_c^{(i)}$. Similarly, for $t\in q_o^i \subseteq Q^{(i)}\setminus Q_c^{(i)}$, the condition $M(\hat{\mathcal{R}}_i(t)) \cap M(\hat{\mathcal{R}}_j(t)) = \emptyset, \ \forall j\in\mathcal{N}_i^c$, holds as well. 
% Also by definition \ref{TubePara}, then condition 2 is satisfied by the fused tube as in Equation \ref{TubeUni}.
\end{proof}
\end{comment}

In the following Example-1, we explain the iterative structure of the NEGOTIATECOLLISION algorithm, where any potential collision is detected and resolved by freezing the parameterized tubes $\hat{\mathcal{R}}_i(t)$ at safe points, redefining it only after the collision intervals.  \\
\textbf{Example-1:} Consider the set of four agents $\mathcal{I} = \lbrace 1,2,3,4 \rbrace$ of which agents $1$ and $3$ have collision between time zone $t_a$ and $t_b$ and agents $2$ and $4$
have collision between time zone $t_c$ and $t_d$, where $t_a < t_b < t_c < t_d$. The communication topology and sequence of negotiations for the first iteration are shown in Figure \ref{topol}. Agents with collision-free tubes from the start to the target position were not parameterized during the iterations, we consider that the prescribed time for all agents is same as $t_p$. Table \ref{Param_Tube} shows the sequence of negotiations, such that we obtain a collision-free set of tubes. Now, suppose the parameterized tube obtained when agent $1$ mitigates collision with agent $3$ ends up into a new collision with agent $4$ in collision zone $[t_e,t_f]$, then agent $4$ updates its tube to mitigate collision with updated tube of agent $1$. The negotiation iterations continue until all the agents get collision free tubes as shown in Table 
\ref{Param_Tube}. The parameterized tube are then written as in (\ref{Parametric1})-(\ref{Parametric3}).
%\vspace{-6pt}
\begin{figure}[H]%[!ht]
	\centering
	\includegraphics[width=1\linewidth]{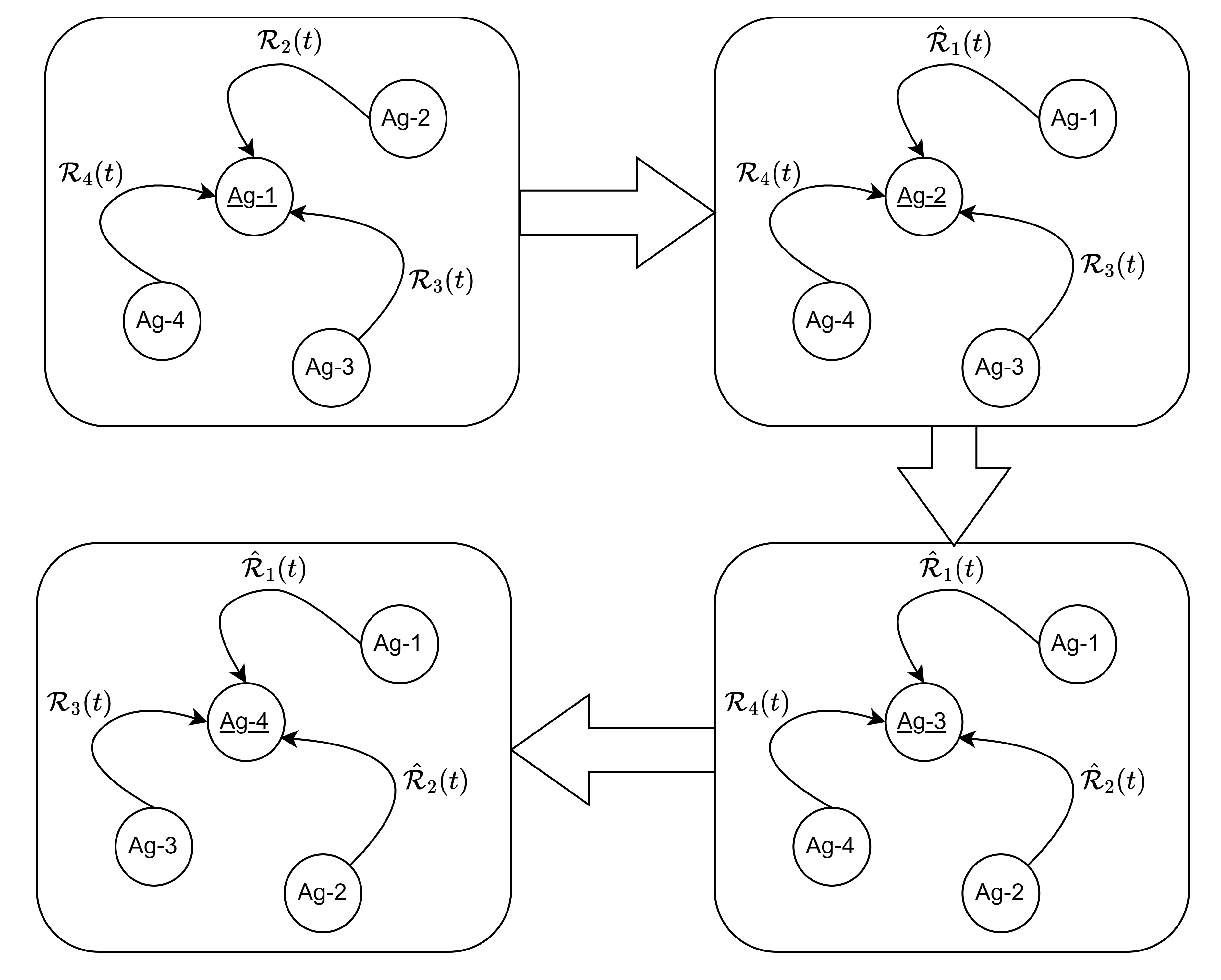}
	\caption{Negotiation Sequence: Iteration-1}
	\label{topol}
\end{figure}
%\vspace{-27pt}
\begin{table}[H]%[ht!]
\textwidth
%\resizebox{\textwidth}{!}
\centering
	\caption{Tube Parameterization}
	\begin{tabular}{|l|l|l|l|}
	\hline
\multicolumn{1}{|c|}{\textbf{\begin{tabular}[c]{@{}c@{}}Hub\\ Agent\end{tabular}}} &
  \multicolumn{1}{c|}{\textbf{\begin{tabular}[c]{@{}c@{}}Tubes\\ Uploaded\end{tabular}}} &
  \multicolumn{1}{c|}{\textbf{\begin{tabular}[c]{@{}c@{}}Collision\\ Case\end{tabular}}} &
  \multicolumn{1}{c|}{\textbf{\begin{tabular}[c]{@{}c@{}}Updated\\ Tube\end{tabular}}} \\ \hline
       \multicolumn{4}{|c|}{\textbf{Iteration-1}} \\ \hline
 
		1 & ${\mathcal{R}}_2(t),{\mathcal{R}}_3(t),{\mathcal{R}}_4(t)$ & ${\mathcal{R}}_1(t) \cap {\mathcal{R}}_3(t) \neq \emptyset$ & $\hat{{\mathcal{R}}}_1(t)$ \\ \hline
		2 & $\hat{{\mathcal{R}}}_1(t),{\mathcal{R}}_3(t),{\mathcal{R}}_4(t)$ & $\mathcal{R}_2(t) \cap {\mathcal{R}}_4(t) \neq \emptyset$ & $\hat{{\mathcal{R}}}_2(t)$ \\ \hline
		3 & $\hat{{\mathcal{R}}}_1(t),\hat{\mathcal{R}}_2(t),{\mathcal{R}}_4(t)$ & - & ${\mathcal{R}}_3(t)$ \\ \hline
		4 & $\hat{\mathcal{R}}_1(t),\hat{\mathcal{R}}_2(t),{\mathcal{R}}_3(t)$ & ${\mathcal{R}}_4(t) \cap \hat{\mathcal{R}}_1(t) \neq \emptyset$ & $\hat{\mathcal{R}}_4(t)$ \\ \hline
        \multicolumn{4}{|c|}{\textbf{Iteration-2}} \\ \hline
        
		1 & $\hat{\mathcal{R}}_2(t),{\mathcal{R}}_3(t),\hat{\mathcal{R}}_4(t)$ &  - & $\hat{\mathcal{R}}_1(t)$ \\ \hline
		2 & $\hat{\mathcal{R}}_1(t),\hat{\mathcal{R}}_3(t),\hat{\mathcal{R}}_4(t)$ & - & $\hat{\mathcal{R}}_2(t)$ \\ \hline
		%3 & $\hat{\mathcal{R}}_1(t),\hat{\mathcal{R}}_2(t),\hat{\mathcal{R}}_4(t)$ & - & ${\mathcal{R}}_3(t)$ \\ \hline
		4 & $\hat{\mathcal{R}}_1(t),\hat{\mathcal{R}}_2(t),{\mathcal{R}}_3(t)$ & - & $\hat{\mathcal{R}}_4(t)$ \\ \hline  
	\end{tabular}
	\label{Param_Tube}
\end{table}

\begin{subequations}
\begin{align}
    \hat{\mathcal{R}}_1(t) & = \begin{cases} 
      {\mathcal{R}}_1(t), &\forall t \in [0, t_a), \\
      {\mathcal{R}}_1(t_a-\delta), &\forall t \in [t_a, t_b], \\
      {\mathcal{R}}'_1(t-t_b), &\forall t \in (t_b, t_p],
      \label{Parametric1}
   \end{cases} \\
    \hat{\mathcal{R}}_2(t) & = \begin{cases} 
      {\mathcal{R}}_2(t), &\forall t \in [0, t_c), \\
      {\mathcal{R}}_2(t_c-\delta), &\forall t \in [t_c, t_d], \\
      {\mathcal{R}}'_2(t-t_d), &\forall t \in (t_d, t_p], 
      \label{Parametric2}
   \end{cases} \\
    \hat{\mathcal{R}}_4(t) & = \begin{cases} 
      {\mathcal{R}}_4(t), &\forall t \in [0, t_e), \\
      {\mathcal{R}}_4(t_e-\delta), &\forall t \in [t_e, t_f], \\
      {\mathcal{R}}'_4(t-t_f), &\forall t \in (t_f, t_p]. 
   \end{cases}
\label{Parametric3}
\end{align}
\end{subequations}

\section{Controller synthesis}
\label{CtrlSyn}
Thus, the RASCA specification can be enforced by constraining the state trajectory of each agent $i$ within the respective negotiated and obstacle-circumvented tubes $\tilde{\mathcal{R}}_i^{(k)}(t) = [\tilde{\gamma}^{(k)}_{i,L}(t),\tilde{\gamma}^{(k)}_{i,U}(t)]$ as follows: 
\begin{gather}
    \tilde{\gamma}^{(k)}_{i,L}(t) < x^{(k)}_i(t) < \tilde{\gamma}^{(k)}_{i,U}(t), \forall (t,i,k) \in [0,t_p^{(i)}] \times \mathcal{I} \times [1;n_i]. \label{eqn:ppc}
\end{gather}

\begin{theorem}
    Given, negotiated and obstacle-circumvented tubes $\tilde{\mathcal{R}}_i^{(k)}(t) = [\tilde{\gamma}^{(k)}_{i,L}(t),\tilde{\gamma}^{(k)}_{i,U}(t)]$ for agent $i \in \mathcal{I}$ and dimension $k \in [1;n_i]$, the control law 
    \begin{equation}
    \label{Ctrlr}
    	u_i^{(k)}(x_i^{(k)},t) = -\kappa_i^{(k)}  \xi_i^{(k)} (x_i^{(k)},t) \varepsilon_i^{(k)}(x_i^{(k)},t), \kappa_i^{(k)} \in \mathbb{R}^+,
    \end{equation}
    confines the state trajectory of the agent to its corresponding tube i.e $x^{(k)}_i(t) \in \tilde{\mathcal{R}}_i^{(k)}(t)$ for all $t \in \mathbb{R}^+$.
    
    Here,
    $$\varepsilon_i^{(k)}(x_i^{(k)},t) = \ln \left( \frac{1+e_i^{(k)}(x_i^{(k)},t)}{1-e_i^{(k)}(x_i^{(k)},t)} \right)$$
    $$e_i^{(k)}(x_i^{(k)},t) = \frac{x_i^{(k)} - \frac{1}{2}(\tilde{\gamma}_{i,U}^{(k)}(x_i^{(k)},t) + \tilde{\gamma}_{i,L}^{(k)}(x_i^{(k)},t))}{\frac{1}{2} (\tilde{\gamma}_{i,U}^{(k)}(x_i^{(k)},t) - \tilde{\gamma}_{i,L}^{(k)}(x_i^{(k)},t))},$$
    $$\xi_i^{(k)}(x_i^{(k)},t) = \frac{4(1 - (e_i^{(k)}(x_i^{(k)},t))^2)^{-1}}{(\tilde{\gamma}_{i,U}^{(k)}(x_i^{(k)},t) - \tilde{\gamma}_{i,L}^{(k)}(x_i^{(k)},t))}.$$
\end{theorem}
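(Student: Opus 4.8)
The plan is to recognize the control law \eqref{Ctrlr} as a prescribed-performance (funnel) controller and to prove tube invariance by a barrier-Lyapunov argument, handled coordinate by coordinate. Fix an agent $i \in \mathcal{I}$ and a dimension $k \in [1;n_i]$ and suppress both indices for brevity. The first step is to observe that the normalized error $e(x,t)$ maps the open tube $(\tilde\gamma_L(t), \tilde\gamma_U(t))$ bijectively onto $(-1,1)$, so that the constraint \eqref{eqn:ppc} is equivalent to $e \in (-1,1)$; moreover $\varepsilon = \ln\frac{1+e}{1-e}$ is a diffeomorphism from $(-1,1)$ onto $\R$ with $\varepsilon \to \pm\infty$ as $e \to \pm 1$. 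A direct computation gives $\frac{\partial \varepsilon}{\partial e} = \frac{2}{1-e^2}$ and $\frac{\partial e}{\partial x} = \frac{2}{\tilde\gamma_U - \tilde\gamma_L}$, whence $\xi = \frac{\partial\varepsilon}{\partial x}$; this identifies $\xi$ as the normalized barrier gradient and is the structural fact that makes the feedback effective. Since $\varepsilon$ blows up exactly when $x$ reaches the tube boundary, it suffices to prove that $\varepsilon$ stays bounded on the whole horizon.

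Because the initial condition satisfies $x(0) \in S \subseteq (\tilde\gamma_L(0), \tilde\gamma_U(0))$ by the STT properties \eqref{eqn:stt}, we have $e(0) \in (-1,1)$. As $f$ and $g$ are locally Lipschitz (Assumption \ref{assum0}) and the feedback $u(x,t)$ is smooth on the open set where $e \in (-1,1)$, standard ODE theory yields a unique maximal solution on some interval $[0, \tau_{\max})$ along which $e(t) \in (-1,1)$. The second step is to differentiate $\varepsilon$ along this solution. Writing the midpoint $\mu = \tfrac12(\tilde\gamma_U + \tilde\gamma_L)$ and half-width $\rho = \tfrac12(\tilde\gamma_U - \tilde\gamma_L)$, one gets $\dot e = \rho^{-1}(\dot x - \dot\mu - e\dot\rho)$ and hence
\begin{equation}
  \dot\varepsilon = \xi\big(f(x) + g(x)u + d - \dot\mu - e\dot\rho\big) = -\kappa\, g(x)\,\xi^2 \varepsilon + \xi\,\eta(x,t),
  \nonumber
\end{equation}
where $\eta(x,t) := f(x) + d(t) - \dot\mu(t) - e\dot\rho(t)$ collects every term not generated by the feedback.

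The third step is a boundedness argument via the barrier-Lyapunov function $V = \tfrac12\varepsilon^2$, giving $\dot V = -\kappa g(x)\xi^2\varepsilon^2 + \xi\varepsilon\,\eta$. On the domain of interest $x(t)$ lies in the tube, which is contained in the compact set $X$; therefore $f$ is bounded (continuous on a compact set), $d$ is bounded (Assumption \ref{assum0}), and $\dot\mu, \dot\rho$ are bounded (the $\tilde\gamma$ are continuously differentiable on the compact horizon), while $|e| \le 1$, so $|\eta(x,t)| \le \bar\eta$ for a finite constant $\bar\eta$. Invoking the controllability hypothesis on the input matrix inherited from \cite{das2024prescribed} — i.e.\ that the relevant control-effectiveness term is sign-definite with $g(x) \ge \underline g > 0$ — together with $\xi \ge \underline\xi > 0$ (since $1-e^2 \le 1$ forces $\xi \ge 2/\rho$), we obtain
\begin{equation}
  \dot V \le \xi|\varepsilon|\big(-\kappa\,\underline g\,\underline\xi\,|\varepsilon| + \bar\eta\big),
  \nonumber
\end{equation}
which is strictly negative whenever $|\varepsilon| > \bar\eta/(\kappa\,\underline g\,\underline\xi)$. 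Hence $|\varepsilon(t)| \le \bar\varepsilon := \max\{|\varepsilon(0)|,\ \bar\eta/(\kappa\,\underline g\,\underline\xi)\}$ for all $t \in [0,\tau_{\max})$.

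Finally, a uniform bound on $\varepsilon$ translates back through the diffeomorphism into $e(t) \in [\underline e, \overline e] \subset (-1,1)$ with $\underline e > -1$ and $\overline e < 1$ independent of $t$, so the trajectory stays in a compact subset of the open tube and cannot approach its boundary. A standard non-escape argument then rules out finite $\tau_{\max}$: if $\tau_{\max} < \infty$, the solution would have to leave every compact subset of the extended state space, contradicting the uniform bound just established; hence $\tau_{\max} = \infty$ and $x^{(k)}_i(t) \in \tilde{\mathcal{R}}^{(k)}_i(t)$ for all $t$, which restoring the indices $i,k$ completes the proof. The main obstacle I anticipate is the control-effectiveness step: because $g_i$ is unknown, one must justify the sign-definiteness and a \emph{uniform} positive lower bound of the term multiplying $\kappa\xi^2\varepsilon^2$ in $\dot V$, since without it the feedback could drive $x$ toward the boundary; this is precisely where the controllability assumption from the STT framework is indispensable, and equal care is needed to guarantee that $\underline g$ and $\bar\eta$ are genuinely uniform over the compact domain and horizon.
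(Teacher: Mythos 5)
The paper itself does not prove this theorem: it explicitly omits the proof and defers to the STT reference \cite{das2024prescribed}, so there is no in-paper argument to compare against line by line. Your proposal reconstructs precisely the prescribed-performance/funnel route that the cited framework uses --- the normalized error $e$, the logarithmic transformation $\varepsilon$ with $\xi = \partial\varepsilon/\partial x^{(k)}$, the barrier-Lyapunov function $V = \tfrac12\varepsilon^2$, the ultimate bound on $|\varepsilon|$, and the non-escape argument --- so in spirit it is the right (and the intended) approach, and the skeleton is sound.

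However, two steps have genuine gaps. First, the initialization: you assert $x(0)\in S \subseteq (\tilde\gamma_L(0),\tilde\gamma_U(0))$ ``by the STT properties \eqref{eqn:stt},'' but \eqref{eqn:stt} states the \emph{opposite} inclusion, $\prod_{k}[\gamma^{(k)}_{i,L}(0),\gamma^{(k)}_{i,U}(0)] \subseteq S_i$. Hence $x_i(0)\in S_i$ does not place $e(0)$ in $(-1,1)$, and without $e(0)\in(-1,1)$ the entire argument never starts; you must additionally assume (as the STT construction arranges by building the tube around the initial state) that $x_i(0)\in\mathcal{R}_i(0)$. Second, and more seriously, the control-effectiveness step: in this paper $g_i:\R^{n_i}\to\R^{n_i\times m_i}$ is a matrix, so $u_i^{(k)}$ enters every coordinate of $\dot x_i$ and, conversely, $\dot x_i^{(k)}$ is driven by all components of $u_i$. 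The per-coordinate closed loop you analyze is therefore not self-contained, and a scalar bound $g(x)\ge \underline g>0$ is not even well defined coordinate-wise; the term $-\kappa g\xi^2\varepsilon^2$ in your $\dot V$ simply does not arise in the $k$-th coordinate when $g_i$ has off-diagonal entries, and the cross-coupling terms it produces are not dominated by your estimate. You correctly flag this as the main obstacle, but flagging it does not repair the inequality. The standard repair --- and what the cited framework does --- is to stack the transformed errors into $\varepsilon_i=(\varepsilon_i^{(1)},\dots,\varepsilon_i^{(n_i)})$, take $V=\tfrac12\norm{\varepsilon_i}^2$, assume the symmetric part of $g_i$ is uniformly positive definite, and use $\varepsilon_i^\top \Xi g_i \Xi \varepsilon_i \ge \underline g \norm{\Xi\varepsilon_i}^2$, after which your boundedness and non-escape steps go through essentially verbatim. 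A minor further point: the theorem asserts invariance for all $t\in\R^+$, while the tubes and the bounds on $\dot\mu,\dot\rho$ exist only on $[0,t_p^{(i)}]$, so one needs the convention that the tube is held constant after $t_p^{(i)}$.
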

Due to space constraints, we are omitting the proof in this paper. The details on the derivation of the controller can be found in \cite{das2024prescribed}.

\begin{remark}
    The feedback controller in Equation \eqref{Ctrlr} is entirely independent of system dynamics and disturbances while ensuring the system state remains within the prescribed spatiotemporal tubes (STTs), guaranteeing robustness and constraint satisfaction. 
\end{remark}

\section{Simulations}
\label{Sim}
\subsection{Case Study-1: Robot Navigation}
We demonstrate the effectiveness of the RASCA task algorithm on a homogeneous multi-agent system, where each agent is a three-wheeled omni-directional robot whose dynamics are defined as:

\begin{figure*}[h]
  \includegraphics[width=\textwidth]{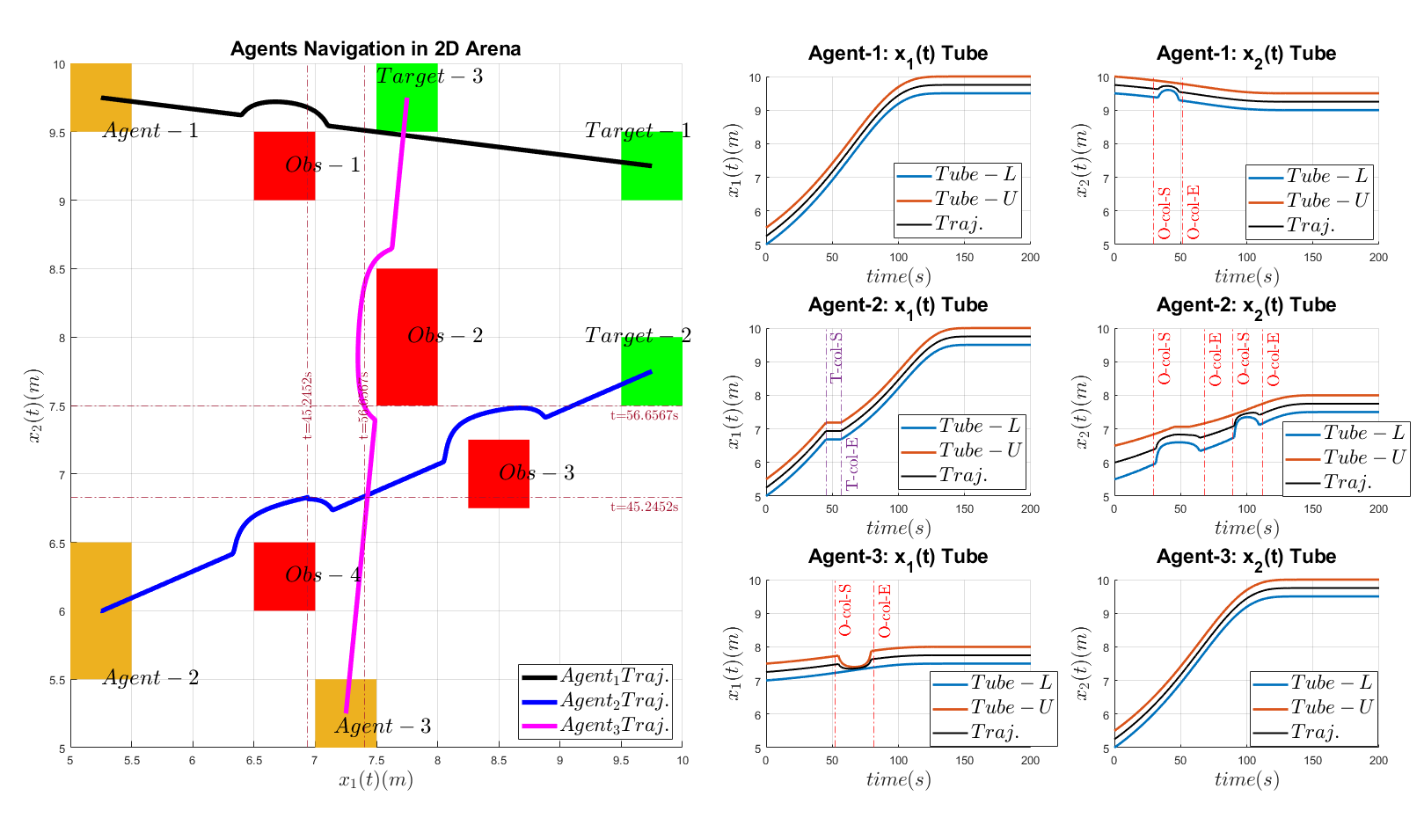}
  \caption{RASCA task in a 2D arena for Robotic navigation and respective temporal tubes (T-Col-S and T-Col-E are tube collision $\textit{Start}$ and $\textit{End}$ instances and O-Col-S and O-Col-E are obstacle collision $\textit{Start}$ and $\textit{End}$ instances respectively)}
  \label{2Darena}
\end{figure*}

%\begin{figure*}[h]
%  \includegraphics[width=\textwidth,height=6.5cm]{}
%  \caption{RASCA task in a 2D arena for Robotic navigation and respective temporal tubes}
%  \label{2Darena}
%\end{figure*}

\begin{equation}
	\begin{aligned}
		\left[\begin{array}{c}
			\dot{x}_1 \\
			\dot{x}_2 \\
			\dot{x}_3
		\end{array} \right]
		= 
		\left[ \begin{array}{ccc}
			\cos x_3 & -\sin x_3 & 0 \\
			\sin x_3  & \cos x_3 & 0 \\
			0 & 0 & 1 \\
		\end{array} \right]  
		\left[ \begin{array}{c}
		v_1 \\
		v_2 \\
		\omega \\
		\end{array} \right] + w(t)  
	\end{aligned}
	\label{OmniDyn}
\end{equation}
where the states of the robot are defined by the vector $[x_1,x_2,x_3]^T$ and  inputs as $[v_1,v_2,\omega]$, which is the vector of velocities in $x_1$ and $x_2$ directions, $\omega$ is the wheel angular velocity and $w$ is the bounded external disturbance. The arena is a two-dimensional space with limits in $x_1$ and $x_2$ directions being $[0,10]m$. Let the number of agents in the arena be $n=3$ with their starting positions belonging to sets represented as hyper-rectangles $Ag_1 = [5,5.5;9.5,10]$, $Ag_2 = [5,5.5;5.5,6.5]$ and $Ag_3 = [7,7.5;5,5.5]$. The arena has four static obstacles at hyper-rectangular locations $O_1 = [6.5,7;9,9.5], O_2 = [7.5,8;7.5,8.5], O_3 = [8.25,8.75;4.75,5.25]$ and $O_4 = [6.5,7;6,6.5]$. The target location hyper-rectangles for the agents are $T_1 = [9.5,10;9,9.5]$, $T_2 = [9.5,10;7.5,8]$ and $T_3 = [7.5,8;9.5,10]$ and the prescribed time for all the agents is $t_p=200s$. Figure \ref{2Darena} shows the spatiotemporal tubes along with tube-obstacle collision instances. As shown in Figure \ref{simul}, agents $2$ and $3$ collide during time instances $t = 45.2452 s$ and $t = 56.6567 s$, which is averted owing to negotiation-based tube parameterization. An animation of this simulation can be found in \cite{Video2024}. 
\begin{figure}[htpb]
	\centering
	\includegraphics[width=1\linewidth]{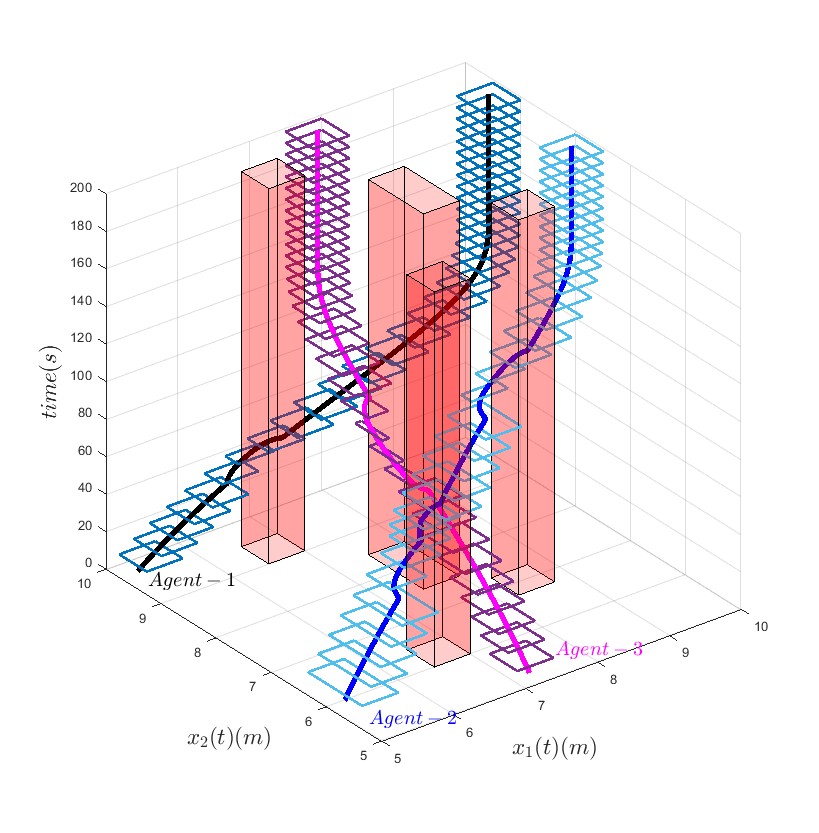}
	\caption{3 Agents navigation: RASCA task }
	\label{simul}
\end{figure}

\subsection{Case Study-2: Drone Navigation}
In this case study, we demonstrate a simulation collision avoidance scenario of randomly initialized initial and target locations of multi-agent system of drones with agent dynamics adapted from \cite{APF}: 
\begin{equation}
\label{Drone_DiD}
\begin{aligned}
 \dot{p}_{i} = q_{i}  
 % \\
    % \dot{q}_{i} = u_{i}   
\end{aligned}
\end{equation}
where, $p_i$ capture the position and $q_i$ is the velocity input of the $i^{th}$ drone. The initial and target locations were three-dimensional hyperpolygons, that is cuboids. The initial location cuboids of all the agents are given as,
$Ag_1 = [2.64,3.14;3.5,4;3.27,3.77]'$, 
$Ag_2 = [2.93,3.43;2.99,3.5;4.22,4.72]'$, 
$Ag_3 = [2.4,2.9;1.79,2.29;3.02,3.52]'$,
$Ag_4 = [1.92,2.48;0.59,1.09;1.97,2.47]'$,
$Ag_5 = [2.46,2.96;1.78,2.28;1.8,2.29]'$ and 
$Ag_6 = [3.38,3.88;2.35,2.85;2.20,2.70]'$. 
The three-dimensional arena limits are $[0,5]m$ along all the dimensions. In the navigation task, tubes of agents $1$ and $3$ have collisions between time instances $t = 32.03s$ and $t = 51.85s$, and the tubes of agents $5$ and $6$ have collisions between time instances $t = 43.043 s$ and $t = 43.243 s$. The three-dimensional tubes are temporally parameterized through negotiation such that collision-free tubes are obtained as shown in the drone navigation scenario in Figure \ref{Drone}.

\begin{figure}[htpb]
	\centering
	\includegraphics[width=0.9\linewidth]{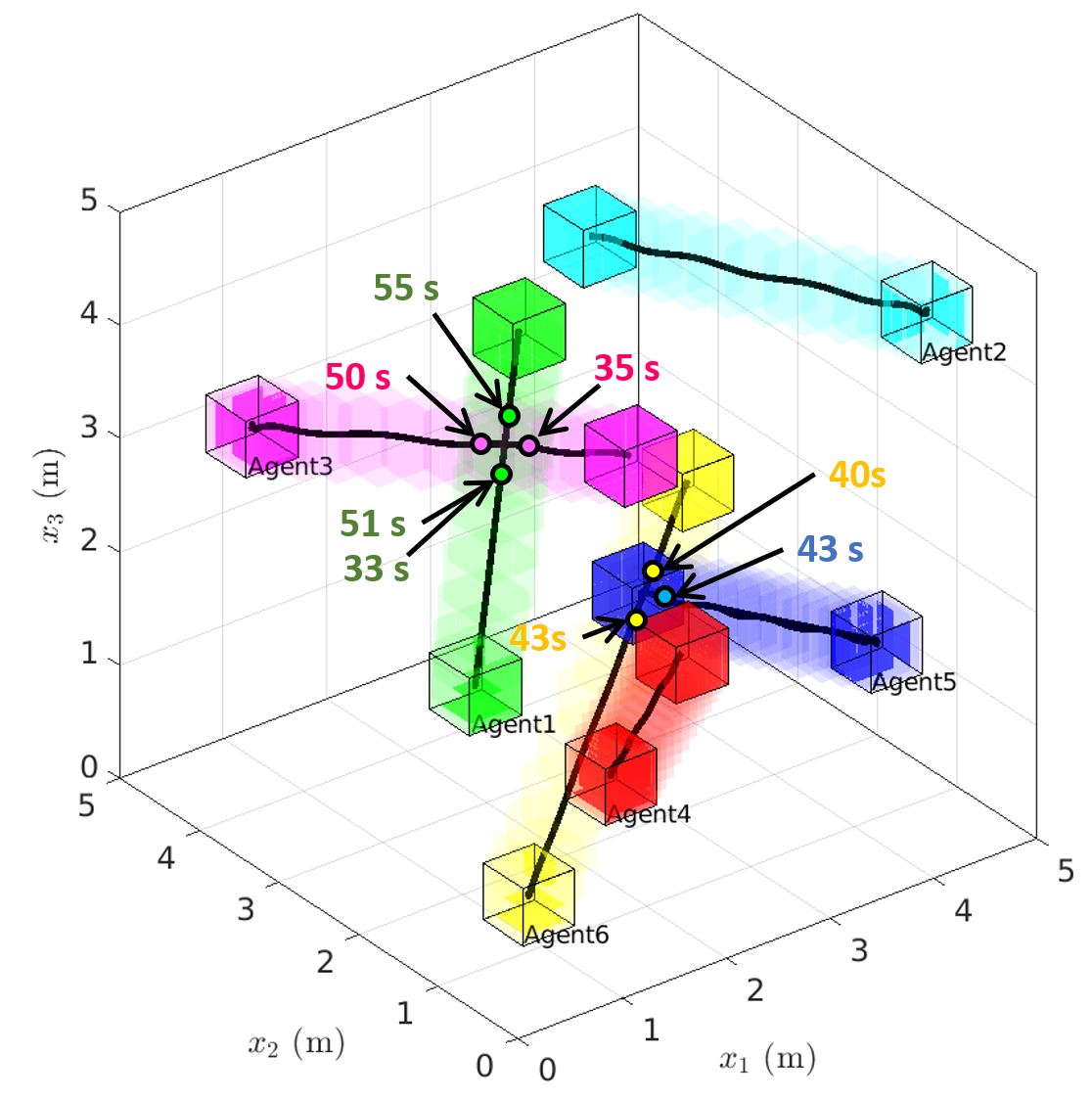}
	\caption{Drone Navigation: Collision Avoidance}
	\label{Drone}
\end{figure}

\subsection{Comparison}
The framework ensures zero collisions, even in compact environments, by temporally halting agents when necessary. In contrast to standard multi-agent path planning techniques, which struggle with collision avoidance in dense arenas, and symbolic control methods, which are computationally expensive, our approach provides an efficient and scalable solution. Table \ref{tab:comparison} below summarizes the key advantages of our proposed framework compared to existing techniques:

\begin{table*}[h!]
\centering
\caption{Comparison of the proposed spatiotemporal tubes-based framework with standard multi-agent path planning and symbolic control techniques}

\begin{tabular}{|p{3cm}|p{4cm}|p{4cm}|p{4.5cm}|}
\hline
\textbf{Feature} & \textbf{Proposed Framework} & \textbf{Multi-Agent Path Planning \cite{lin2022review}} & \textbf{Symbolic Control Techniques \cite{tabuada2009verification}} \\ 
\hline
\textbf{Collision Avoidance in Compact Arena} & Zero collisions, even in compact environments by temporally halting agents. & Collision-free paths but less effective in very compact arenas. & Achieves collision avoidance but computationally expensive. \\ 
\hline
\textbf{Handling Prescribed-Time Tasks} & Handles prescribed-time reach-avoid-stay tasks efficiently. & Does not inherently handle prescribed-time specifications. & Handles temporal constraints but at high computational cost. \\ 
\hline
\textbf{Computational Efficiency} & Highly efficient with a closed-form and approximation-free solution. & Efficient but may struggle in dense environments. & Computationally intensive due to state-space abstractions. \\ 
\hline
\textbf{Scalability} & Scales effectively with multiple agents. & May face issues scaling in complex, crowded arenas. & Limited scalability due to high computational requirements. \\ 
\hline
\end{tabular}
\label{tab:comparison}
\end{table*}

\section{Conclusion and Future Work}
This study explained path planning and navigation using a negotiation framework for a multi-agent system specified in the RASCA task. This was achieved using the STT and an associated controller that guarantees collision free RAS task in the prescribed time. The negotiation process continues iteratively until all the agents obtain parameterized tubes that cover the entire prescribed time range. The algorithm proposed in this study was verified using a path planning and robot navigation task on an omni-directional robot in a two-dimensional arena and a drone navigation task in a three-dimensional arena. 
% It can also be observed that the path planning achieved using negotiations is distributed, in the sense that once negotiation terminates, the agents are completely independent. 
It can also be observed that the path planning achieved using negotiations is distributed, in the sense that once negotiation terminates, agents don't depend on other agents or a centralized entity for collision free navigation. 
    
In this study we consider a network topology that sequentially selects a hub agent. In future work we plan to introduce a  game-theoretic framework that assigns priority to agents in a compact and complex arena which has tighter prescription of reach time on agents. Scenarios like overlap of target space of one agent with STT of other agent can create stalemate situations with sequential order of negotiation, this can be mitigated with an agent-wise priority structure. The priority structure enforces an order in the negotiation sequence, so that the right-of-way in a navigation task is allotted to an agent in need accordingly, such that for all the agents specified RASCA task is completed in the prescribed time, despite heterogeneity in agent dynamics.

% conference papers do not normally have an appendix
% use section* for acknowledgment
%\section*{Acknowledgment}
% trigger a \newpage just before the given reference
% number - used to balance the columns on the last page
% adjust value as needed - may need to be readjusted if
% the document is modified later
%\IEEEtriggeratref{8}
% The "triggered" command can be changed if desired:
%\IEEEtriggercmd{\enlargethispage{-5in}}

% references section

% can use a bibliography generated by BibTeX as a .bbl file
% BibTeX documentation can be easily obtained at:
% http://mirror.ctan.org/biblio/bibtex/contrib/doc/
% The IEEEtran BibTeX style support page is at:
% http://www.michaelshell.org/tex/ieeetran/bibtex/
\bibliographystyle{IEEEtran}
% argument is your BibTeX string definitions and bibliography database(s)
%\bibliography{IEEEabrv,../bib/paper}
%
% <OR> manually copy in the resultant .bbl file
% set second argument of \begin to the number of references
% (used to reserve space for the reference number labels box)
%\begin{thebibliography}{1}

%\bibitem{IEEEhowto:kopka}
%H.~Kopka and P.~W. Daly, \emph{A Guide to \LaTeX}, 3rd~ed.\hskip 1em plus
%  0.5em minus 0.4em\relax Harlow, England: Addison-Wesley, 1999.

%\end{thebibliography}

\bibliography{./MyLibrar/MyLibrary}

% that's all folks
\end{document}